\documentclass[aps,reprint,amsmath,amssymb]{revtex4-2}
\usepackage{amsthm}
\usepackage{graphicx}
\usepackage{dcolumn}
\usepackage{bm}
\usepackage[colorlinks=true, allcolors=blue]{hyperref}

 \newtheorem{lemma}{Lemma}

\newtheorem{cor}{Corollary}

\renewcommand{\(}{\left(}
\renewcommand{\)}{\right)}
\renewcommand{\[}{\left[}
\renewcommand{\]}{\right]}
\newcommand{\e}{\text{e}}
\newcommand{\Tr}{\text{Tr}}

\newcommand{\ad}{\text{ad}_H}

\begin{document}

\title{Trotter error timescaling separation via commutant decomposition}

\author{Yi-Hsiang Chen}
\affiliation{Quantinuum, 303 South Technology Court, Broomfield, Colorado 80021, USA}
\email{yihsiang.chen@quantinuum.com}

%\date{\today}

\begin{abstract}
Suppressing the Trotter error in dynamical quantum simulation typically requires running deeper circuits, posing a great challenge for noisy near-term quantum devices. Studies have shown that the empirical error is usually much smaller than the one suggested by existing bounds, implying the actual circuit cost required is much less than the ones based on those bounds. Here, we improve the estimate of the Trotter error over existing bounds, by introducing a general framework of commutant decomposition that separates disjoint error components that have fundamentally different scaling with time. In particular we identify two error components that each scale as $O(\tau^pt)$ and $O(\tau^p)$ for a $p$th-order product formula evolving to time $t$ using a fixed step size $\tau$, it implies one would scale linearly with time $t$ and the other would be constant of $t$. We show that this formalism not only straightforwardly reproduces previous results but also provides a better error estimate for higher-order product formulas. We demonstrate the improvement both analytically and numerically. We also apply the analysis to observable error relating to the heating in Floquet dynamics and thermalization, which is of independent interest.
\end{abstract}

\maketitle

\section{Introduction}
Simulating quantum dynamics has remained one of the most promising applications of quantum computers. Digitally performing a continuous-time evolution generated by a Hamiltonian $H$ with elementary quantum gates is known as Hamiltonian simulation. The earliest Hamiltonian simulation quantum algorithm was given by Lloyd \cite{Lloyd1996} where the global evolution is decomposed into primitive operations using Lie-Trotter formula \cite{trotter}. Numerous methods have since been developed to improve the asymptotic scaling in various parameters such as approximation error and time \cite{LCU2015,QSP2017}. However, the enormous prefactors in these algorithms render them impractical for near-term applications \cite{Childs2018}, whereas methods based on Lie-Trotter decomposition, known as product formulas (PFs)  \cite{SUZUKI1990319,Hatano2005}, and its variants \cite{Childs2019fasterquantum,low2019,Carrera2023,lubi2023} are known to have smaller constant overhead, hence leading to shorter circuits required for near-term devices. 

For an $n$-qubit system evolving to time $t$, the gate complexity of a fixed-accuracy $p$th-order PF is $O(n^k t^{1+1/p})$ \cite{PhysRevX.11.011020}, where $k$ is a model dependent constant. Its complexity in time $O(t^{1+1/p})$ is known to be suboptimal, although the optimality $O(t)$ can be \emph{practically} reached by increasing the order $p$ of the PF \cite{PhysRevLett.123.050503}. In addition, for systems with only two non-commuting terms, i.e., $H=H_1+H_2$, it was shown that the accumulated Trotter error scales as $O\(n\tau+n\tau^2t\)$ where $\tau$ is the step size and $t$ is the total evolution time \cite{PhysRevLett.124.220502,PhysRevLett.128.210501}. To simulate to an accuracy $\epsilon$, the gate complexity in time appears to be optimal $O(t)$ for short times even under a first-order PF \cite{PhysRevLett.124.220502,PhysRevLett.128.210501}. Tran \emph{et al.} \cite{PhysRevLett.124.220502} attributes this to the interference of the Trotter error between different steps whereas Layden \cite{PhysRevLett.128.210501} realizes it from the fact that the first-order PF has almost identical circuit structure as the second-order PF except the difference in the first and the last step, resulting in the $O(n\tau)$ term that dominates at short times. Remarkably, this $O(\tau)$ term not only is the reason behind the optimal timescaling but also implies the accumulated Trotter error does not increase with $t$ under a fixed step size $\tau$---a feature first observed in \cite{doi:10.1126/sciadv.aau8342}. However, it is unknown how the result generalizes beyond a two-term Hamiltonian with a first-order PF. 

In this paper, we focus on how Trotter error accumulates over time and provide a general framework to realize its behavior where we identify the key $O(\tau^p)$ term that has the optimal timescaling, for a general Hamiltonian $H$ simulated with a $p$th-order PF.  This is done by introducing a commutant decomposition for the error terms, where the component in the commutant scales as $O(\tau^pt)$ and the component in the orthogonal complement has $O(\tau^p)$. Under this formalism, it is clear that for a two-term Hamiltonian, the first-order PF has no component in the commutant (up to the leading order), hence recovering the $O(\tau+\tau^2t)$ scaling shown in \cite{PhysRevLett.124.220502,PhysRevLett.128.210501}. We also apply this decomposition for the second-order PF and obtain a better Trotter estimate than the previous bound \cite{PhysRevX.11.011020,Kivlichan2020improvedfault}. This can also be used to obtain a shorter circuit required for a given accuracy, where we observe an $\approx 40\% $ saving under a near-term relevant setting, hence making it more viable for near-term devices. Finally, we apply it to observable Trotter error and find implications for Floquet dynamics, heating and thermalization.

\section{timescaling separation}
Suppose $\e^{-iHt}$ is the ideal evolution up to time $t$ generated by the Hamiltonian $H$. The evolution is decomposed into $r$ steps using step size $\tau:=t/r$ where each step is 
\begin{align}
U:=\e^{-iH\tau}.
\end{align} 
The PF approximates each $U$ with small step size $\tau$ such that the accumulated evolution approaches the ideal evolution $\e^{-iHt}=U^r$. Denote the approximation error to $U$ as $\delta$, then the total Trotter error after $r$ steps is
\begin{align}
(U+\delta)^r-U^r=\left(\sum_{k=0}^{r-1}U^k \delta U^{-k}\right)U^{r-1} +O\left(r^2||\delta||^2\right), \label{totalerror1}
\end{align}
where we assume $r||\delta||<1$. See Appendix~\ref{app:U_delta} for details.  For a $p$th-order PF, $\delta$ is $O(\tau^{p+1})$ in step size $\tau$ \cite{PhysRevX.11.011020}. 
When $r||\delta||$ is small enough, the total Trotter error is dominated by the first term in Eq. (\ref{totalerror1}), which is a sum of different powers of the rotation, i.e.,
\begin{align}
\Delta :=\sum_{k=0}^{r-1}U^k \delta U^{-k}=\sum_{k=0}^{r-1}\e^{-ik\tau\ad}\delta,
\end{align}
where we denote the commutator as $\ad(\cdot):=[H,\cdot]$ and use the identity $\e^{-iHk\tau}(\cdot)\e^{iHk\tau}=\e^{-ik\tau\ad}(\cdot)$ with $\cdot$ denoting any operator. Now we introduce a commutant decomposition for the operator space, i.e., denoting $\mathcal{M}_{\mathbb{C}}(d,d)$ the set of all $d\times d$ matrices and $\mathcal{M}_{\mathbb{C}}(d,d)=\mathbf{H}_{||}\oplus\mathbf{H}_{\perp}$ where $\mathbf{H}_{||}=\{\forall M \in \mathcal{M}_{\mathbb{C}}(d,d)\ | \ [H,M]=0\}$ is the commutant of $H$ and $\mathbf{H}_{\perp}=\mathcal{M}_{\mathbb{C}}(d,d)\setminus \mathbf{H}_{||}$ is the orthogonal complement which is non-commuting with $H$. Then the error term $\delta$ can be uniquely decomposed as $\delta=\delta_{||}+\delta_{\perp}$, where $\delta_{||}\in \mathbf{H}_{||}$ and $\delta_{\perp}=\delta-\delta_{||}\in \mathbf{H}_{\perp}$ (see Appendix~\ref{app:commutant_deomp} for details). The accumulated error $\Delta$ becomes
\begin{align}
\Delta=r\delta_{||}+\sum_{k=0}^{r-1}\e^{-ik\tau\ad}\delta_{\perp}. \label{totalerror2}
\end{align}
One observes that different components of $\delta$ accumulate differently, i.e., the component that commutes with $H$ adds linearly in steps while the non-commuting part gets rotated at every step then added. Fig. \ref{rotation_plot} illustrates this effect where $H$ rotates $\delta_{\perp}$ by the same amount for each step while $\delta_{||}$ is ``parallel" (commuting) to the rotation axis $H$ and unaffected. 
\begin{figure}[h]
\centering
\includegraphics[width=4cm]{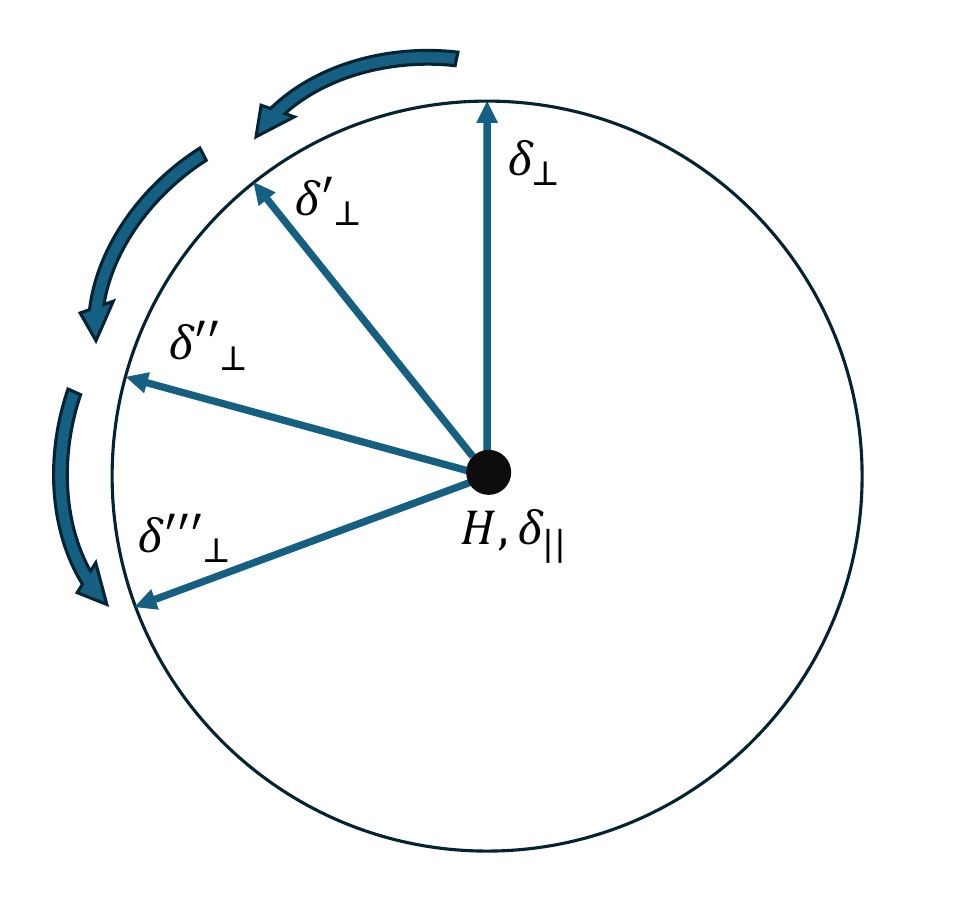}
\caption{$\delta_{\perp}$ is rotated by $H$ pointing outward from the plane and $\delta_{||}$ is ``parallel'' to $H$ and unaffected.}
\label{rotation_plot}
\end{figure}
This is the key effect that leads to different error scalings in a Trotterized circuit. To further simplify $\Delta$, one can rewrite the sum of rotation operators as \footnote{Observe that this is a geometric series applied to the operator $\ad$, which is a diagonal matrix in $H$'s eigen-operator representation $|i\rangle\langle j|$ (Appendix~\ref{app:commutant_deomp}). The geometric series is then individually applied to each diagonal element.}
\begin{align}
\sum_{k=0}^{r-1}\e^{-ik\tau\ad}=\frac{1-\e^{-it\ad}}{1-\e^{-i\tau\ad}}, \label{geometricsum}
\end{align}
where the operator $(1-\e^{-i\tau\ad})^{-1}$ is well defined only on $\delta_{\perp}$ (but not $\delta_{||}$). This inverse operator can be further expanded in terms of $\ad^{-1}$, which is also well defined only for elements in $\mathbf{H}_{\perp}$ and is the inverse operation of $\ad$ satisfying $\ad^{-1}\ad(\delta_{\perp})=\delta_{\perp}$. Therefore,
\begin{align}
(1-\e^{-i\tau\ad})^{-1}=\frac{-i}{\tau}\ad^{-1}+O(1), \label{inv_expansion}
\end{align}
which is derived in Appendix \ref{app:for_eq6}. The reason to keep it to order $O(1/\tau)$ is it acts on the operator $\delta_{\perp}$ that is $O(\tau^{p+1})$. Substituting Eq. (\ref{inv_expansion}) to Eq. (\ref{geometricsum}) where the geometric series is replaced by $1-\e^{-it\ad} $ multiplying Eq. \eqref{inv_expansion}, and applying the series to $\delta_{\perp}$ in Eq. (\ref{totalerror2}), we have
\begin{align}
\Delta=&r\delta_{||}+\frac{i}{\tau}\left[\e^{-iHt}\ad^{-1}(\delta_{\perp})\e^{iHt}-\ad^{-1}(\delta_{\perp})\right]  \nonumber\\
&+O\left(\tau^{p+1}\right).
\end{align}
Therefore, the total Trotter error in the unitary is
\begin{align}
&\left|\left|(U+\delta)^r-U^r\right|\right|=||\Delta|| +O(r^2||\delta||^2) \nonumber\\
&=\left|\left| r\delta_{||}+\frac{i}{\tau}\left[\e^{-iHt}\ad^{-1}(\delta_{\perp})\e^{iHt}-\ad^{-1}(\delta_{\perp})\right] \right|\right| \nonumber\\
&\ \ \ +O\left(\tau^{p+1}\right) \nonumber\\
&\leq r\left|\left|\delta_{||}\right|\right|+  \frac{2}{\tau}\left|\left|\ad^{-1}(\delta_{\perp})\right|\right| +O\left(\tau^{p+1}\right)\label{errorbound}\\
&= O\left(\tau^pt +\tau^p\right) \label{errorscaling}.
\end{align}
This shows a separation of the error scaling with $t$ in a $p$th-order PF. In particular, given a fixed step size $\tau$, the component from $\delta_{\perp}$ does not scale with $t$ while the part from $\delta_{||}$ scales linearly with $t$. To simulate the dynamics with a fixed accuracy $\epsilon$, one would use large enough $r$ such that the total error is below $\epsilon$. The total gate complexity is then proportional to $r$. Eqs. (\ref{errorbound}) and (\ref{errorscaling}) indicate the complexity in time will depend on the relative size of $\delta_{||}$ and $\delta_{\perp}$. When $||\delta_{||}|| \gg ||\delta_{\perp}||$, it approaches $O(t^{1+1/p})$ as in previous results \cite{PhysRevX.11.011020} while it saturates the lower bound $O(t)$ \cite{doi:10.1137/18M1231511} when $||\delta_{||}|| \ll ||\delta_{\perp}||$. 

\subsection{Two-term Hamiltonians}
Now we consider the generic case where the Hamiltonian is a sum of only two non-commuting parts, i.e., $H=H_1+H_2$, where each $H_{1,2}$ is a sum of commuting Paulis. Examples of such systems include all one-dimensional (1D) nearest-neighbor spin chains and the transverse-field Ising (TFI) model in any lattice dimension. 

\subsubsection{Re-deriving first-order PF's result}
It is shown in \cite{PhysRevLett.124.220502,PhysRevLett.128.210501} that the Trotter error of the first-order PF scales as $O(\tau+\tau^2t)$, which is a tighter bound than one would naively obtain using $O(\tau^2)$ for each step and $O(\tau t)$ being the total error after $r=t/\tau$ steps. Here, we show that this can be straightforwardly seen via commutant decomposition in Eq. (\ref{errorbound}). Recall that for each step, the error in the first-order PF is  
\begin{align}
\delta&=\frac{1}{2}[H_1,H_2]\tau^2+O(\tau^3)\nonumber\\
&=\frac{1}{2}[H,H_2]\tau^2+O(\tau^3)= \delta_{||}+\delta_{\perp},
\end{align}
where 
\begin{align}
\delta_{||}=0+O(\tau^3),\ \ \delta_{\perp}=\frac{1}{2}[H,H_2]\tau^2+O(\tau^3). \label{deltaterms}
\end{align}
Note that $\delta_{||}=0$ up to order $O(\tau^2)$ because $[H,H_2]$ has no component in the commuting space $\mathbf{H}_{||}$ as shown in Corollary \ref{ad_in_Hperp} in Appendix~\ref{app:commutant_deomp}. Substituting Eq. (\ref{deltaterms}) to Eq. (\ref{errorbound}), we see that the Trotter error has
\begin{align}
&\left|\left|(U+\delta)^r-U^r\right|\right|\leq\tau\left|\left| \ad^{-1}(\ad(H_2))\right|\right|+O\left(r\tau^3\right) \nonumber \\
&=\tau\left|\left| H_{2\perp}\right|\right|+O\left(\tau^2t\right)=O\(\tau+\tau^2t\),
\end{align}
which recovers the timescaling obtained in \cite{PhysRevLett.124.220502,PhysRevLett.128.210501}.

\subsubsection{Second-order PF}
As explained in \cite{PhysRevLett.128.210501}, for Hamiltonian $H=H_1+H_2$, the circuit structure of the first- and second-order PF is nearly identical except the difference at the beginning and the end, implying one should use the second-order formula, i.e.,
\begin{align}
U_2=\e^{-iH_2\tau/2}\e^{-iH_1\tau}\e^{-iH_2\tau/2}. \nonumber
\end{align} 
Previous studies \cite{Kivlichan2020improvedfault,PhysRevX.11.011020} have shown that the cumulative Trotter error $U^r_2$ can be bounded by 
\begin{align}
&\left|\left|U_2^r-U^r\right|\right| \label{existingbound} \\
&\leq \(\frac{1}{12}\left|\left|[H_1,[H_1,H_2]]\right|\right|+\frac{1}{24}\left|\left|[H_2,[H_2,H_1]]\right|\right|\)\tau^2t. \nonumber
\end{align}
This bound is strictly proportional to $\tau^2t$ which is linear in $t$ under a fixed step size $\tau$. In the following, we show one can extract components proportional to $\tau^2$ and obtain a tighter Trotter error estimate for $U_2^r$. First note that the error per step is (see Appendix~\ref{PF2_error_term})
\begin{align}
\delta=\(\frac{i}{24}[H_1,[H_1,H_2]]+\frac{i}{24}[H,[H,H_2]]\)\tau^3 +O\left(\tau^4\right). \label{errorterm}
\end{align}
Observe that the term $[H,[H,H_2]]$ belongs to $\mathbf{H}_{\perp}$ and its cumulative effect is a rotation that has $1/\tau$ dependence. The term $[H_1,[H_1,H_2]]$ can have components in both $\mathbf{H}_{||}$ and $\mathbf{H}_{\perp}$. Ideally one would also separate out its $\mathbf{H}_{\perp}$ part but it is not straightforward for general $H_1$ and $H_2$. Nevertheless, we can bound $\Delta$ by identifying $[H,[H,H_2]]\in \mathbf{H}_{\perp}$ (as shown in Corollary \ref{ad_in_Hperp} in Appendix \ref{app:commutant_deomp}) which leads to $\propto 1/\tau$ while bounding the $[H_1,[H_1,H_2]]$ term with triangular inequality leading to $\propto r$ scaling, i.e.,
\begin{align}
&||\Delta||=\left|\left| \sum_{k=0}^{r-1} U^k\delta U^{-k} \right|\right|\nonumber\\
&\leq \frac{1}{24}\left|\left|\sum_{k=0}^{r-1}U^{k}[H_1,[H_1,H_2]] U^{-k}\right|\right|\tau^3 \nonumber \\
&\ \ \ + \frac{2}{\tau}\frac{1}{24}\left|\left|\ad^{-1}\([H,[H,H_2]]\)\right|\right|\tau^3+O\(\tau^3t\)\\
&\leq \frac{1}{24}\left|\left|[H_1,[H_1,H_2]]\right|\right|\tau^2t +\frac{1}{12}\left|\left|[H_1,H_2]\right|\right|\tau^2+O\(\tau^3t\), \label{tighterbound}
\end{align}
where in the last inequality, we use the fact that $\ad^{-1}(\ad(O_{\perp}))=O_{\perp}$ for any operator $O_{\perp}$ in $\mathbf{H}_{\perp}$ and taking $O_{\perp}=[H,H_2]$ (see Appendix \ref{app:commutant_deomp}). Therefore we have
\begin{align}
&\left|\left|U_2^r-U^r\right|\right|\approx ||\Delta|| \nonumber\\
&\lesssim \frac{1}{24}\left|\left|[H_1,[H_1,H_2]]\right|\right|\tau^2t+\frac{1}{12}\left|\left|[H_1,H_2]\right|\right|\tau^2. \label{tighterbound1}
\end{align}
Notice that the above has terms that scale separately as $O\left(\tau^2t\right)$ and $O(\tau^2)$ while Eq. (\ref{existingbound}) is completely $O(\tau^2t)$. This is the main reason that Eq. (\ref{tighterbound1}) provides a better error estimate than the previous bound Eq. (\ref{existingbound}). We demonstrate this improvement numerically by applying it to two systems--- the 1D mixed-field Ising (MFI) model and the two-dimensional (2D) TFI model. 
\begin{figure}[h]
\centering
\includegraphics[width=4.2cm]{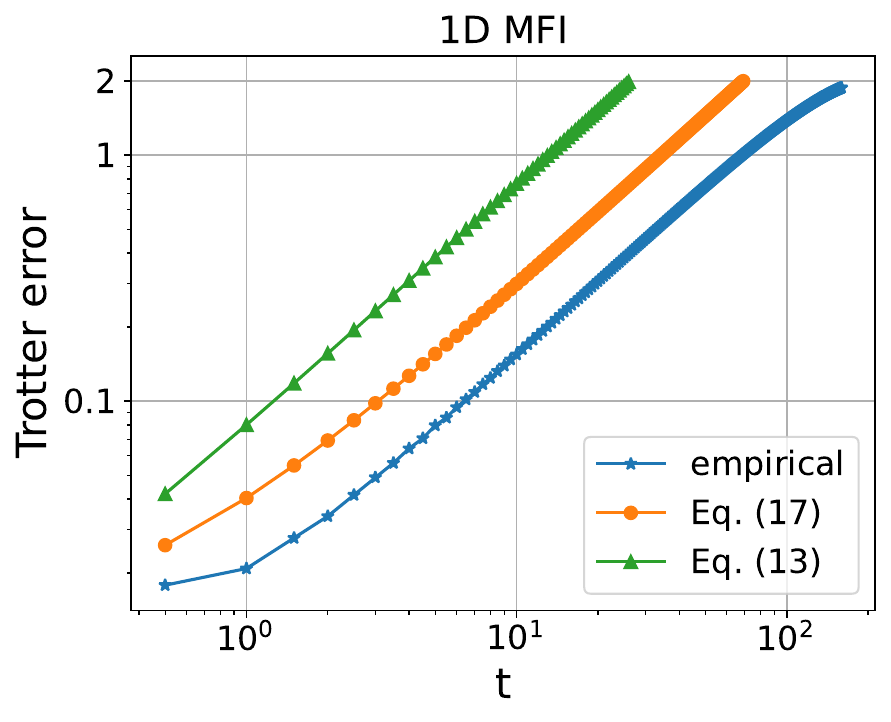}
\includegraphics[width=4.2cm]{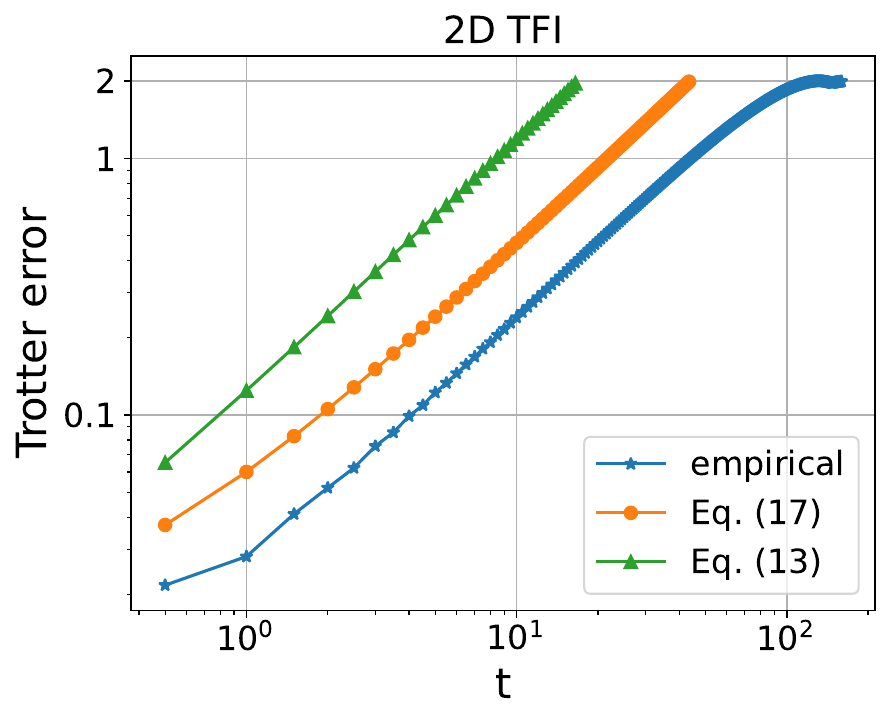}
\caption{The comparison between the previous bound Eq. (\ref{existingbound}) and Eq. (\ref{tighterbound1}) with empirical Trotter error $\left|\left|U^r_2-U^r\right|\right|$, evolving under a fixed step size $\tau=0.05$. $||\cdot||$ is the spectral norm. For 1D MFI, $H_1=J\sum_{i=1}^{11}Z_iZ_{i+1}+h_z\sum_{i=1}^{12} Z_i$ and $H_2=h_x\sum_{i=1}^{12} X_i$ with $J=1, h_z=1.1,$ and $h_x=1.07$. For 2D TFI, $H_1=J\sum_{<i,j>}Z_iZ_j$ where the sum is over nearest-neighbor couplings on a $3\times4$ lattice with periodic boundary and $H_2=h\sum_{i=1}^{12}X_i$ with $J=1$ and $h=1.13$.}
\label{boundplot}
\end{figure}
Fig. \ref{boundplot} compares Eq. (\ref{tighterbound1}) and Eq. (\ref{existingbound}) with the true Trotter error $||U^r_2-U||$ using a fixed step size $\tau=0.05$ evolving up to $t>100$. This shows that one can obtain a tighter estimate by extracting the error components in $\mathbf{H}_{\perp}$ that have a constant-in-time scaling. This can also be used to obtain a shorter circuit depth required to simulate to a given accuracy. For example, to simulate the 1D MFI with 400 qubits up to $t=400$ with accuracy 0.01 (in terms of system size normalized Frobenius norm $||\cdot||_{F}/\sqrt{d}$), the previous bound Eq. (\ref{existingbound}) implies 545244 steps required while Eq. (\ref{tighterbound1}) indicates 334254 steps needed \footnote{Setting the right-hand side of Eq.~\eqref{existingbound} as 0.01 for the required total accuracy, and using $\tau=t/r$ with total time $t=400$, we deduce the required number of steps is $\sqrt{\frac{t^3}{0.01}\(\frac{1}{12}\left|\left|[H_1,[H_1,H_2]]\right|\right|+\frac{1}{24}\left|\left|[H_2,[H_2,H_1]]\right|\right|\)},$ resulting in 545244 steps required using existing bound Eq.~\eqref{existingbound}. Similarly, setting the right-hand side of Eq.~\eqref{tighterbound1} as 0.01 for the required accuracy, we deduce the required steps as $\sqrt{\frac{1}{0.01}\(\frac{1}{24}\left|\left|[H_1,[H_1,H_2]]\right|\right|t^3+\frac{1}{12}\left|\left|[H_1,H_2]\right|\right|t^2\)}$}. Although the method is demonstrated for the first- and second-order PF in the above, the same idea can apply to the higher-order PF by extracting error terms corresponding to $\mathbf{H}_{\perp}$ (e.g., by identifying terms like $[H,\cdot]$) that lead to $O(\tau^p)$ scaling.

\section{Application to observable error}
Now we discuss the Trotter error in observables. Given an initial density matrix $\rho$ and an observable $O$, the observable Trotter is defined as (see Appendix \ref{app:obs_trot_err})
\begin{align}
&\triangle\langle O\rangle_{\rho}:=\Tr\left[(U+\delta)^{-r}O(U+\delta)^r\rho-U^{-r}OU^r\rho\right] \nonumber\\
&=\Tr\left[(\Delta^{\dagger}O+O\Delta)(U^r\rho U^{-r})\right]+ O\left(\tau^{p+1}t\right).
\end{align}
For simplicity, we consider again a two-term Hamiltonian $H=H_1+H_2$ with a second-order PF. The error per step can be written as a sum of two anti-Hermitian operators each corresponding to components in $\mathbf{H}_{||}$ and $\mathbf{H}_{\perp}$, i.e., $\delta\approx i(H_{||}+H_{\perp})\tau^3$, where $H_{||}\in \mathbf{H}_{||}$ and $H_{\perp} \in \mathbf{H}_{\perp}$ are Hermitian. Then one obtains observable error as (see Appendix \ref{app:obs_trot_err})
\begin{align}
&\triangle\langle O\rangle_{\rho} \approx i\Tr\left\{\left[O,H_{||}\right]\rho(t)\right\}\tau^2 t\label{obs_err} \\
&\ \ \ \ +\Tr\left\{\left[O,\ad^{-1}(H_{\perp})-\e^{-iHt}\ad^{-1}(H_{\perp})\e^{iHt}\right]\rho(t) \right\}\tau^2,\nonumber
\end{align}
where $\tau=t/r$ is the step size and $\rho(t)= \e^{-iHt}\rho \e^{iHt}$. In the following, we focus on the fixed-step-size case, i.e., $\tau$ is a constant. We call $O(\tau^2t)$ and $O(\tau^2)$ in Eq. (\ref{obs_err}) the linear-in-$t$ and the constant-in-$t$ terms separately \footnote{To clarify, the value of the $O(\tau^2)$ term is a complicated combination of many oscillations and it could look linear in a particular period under certain circumstances, but we call it constant-in-$t$ in the sense that its overall effect is oscillation that does not grow with time asymptotically}. It appears that for any observable that is a function of $H$, i.e., $O=f(H)$, the linear-in-$t$ component vanishes, since $[f(H),H_{||}]=0$. Therefore, under a fixed step size $\tau$, the Trotter error of $f(H)$ is constant in $t$, with local oscillation from the difference between $\ad^{-1}(H_{\perp})$ and $\e^{-iHt}\ad^{-1}(H_{\perp})\e^{iHt}$. 
\begin{figure}[h]
\centering
\includegraphics[width=4.25cm]{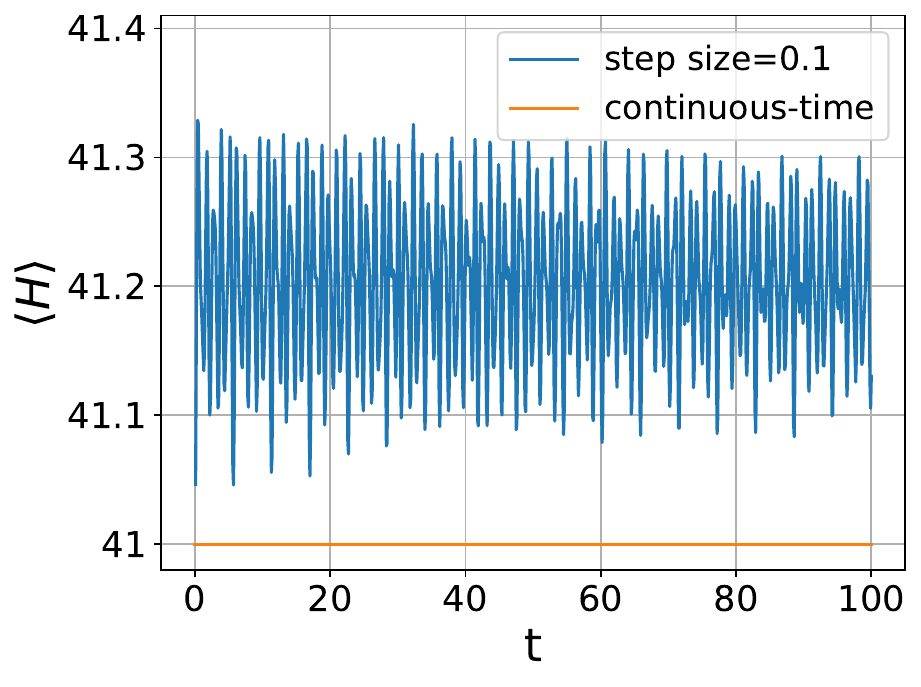}
\includegraphics[width=4.25cm]{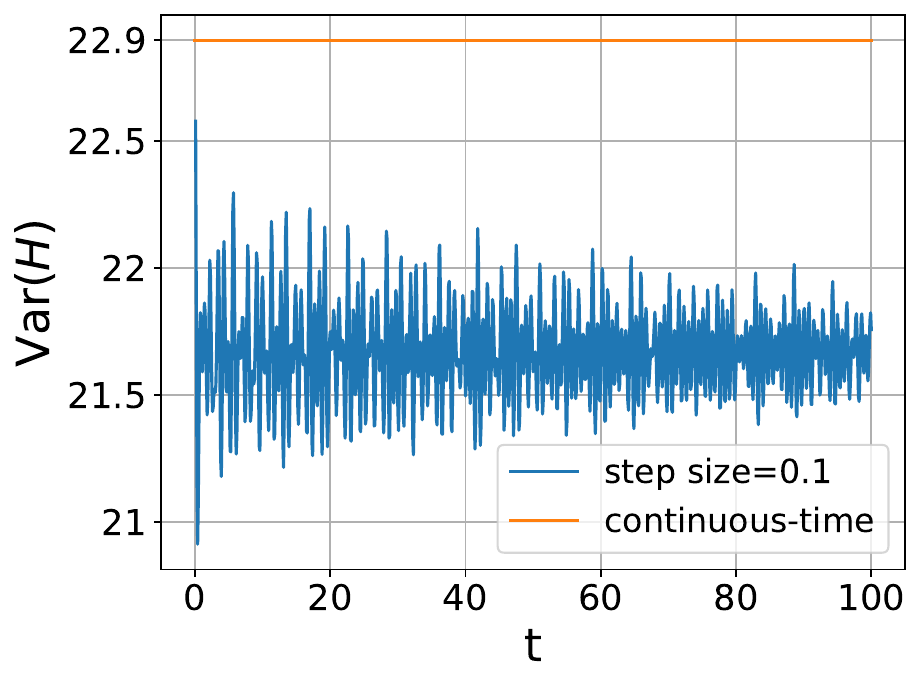}
\caption{The evolution of the energy $\langle H\rangle$ (left plot) and the energy variance $\text{Var}(H)=\langle H^2\rangle -\langle H\rangle^2$ (right plot) using a step size $\tau=0.1$ and comparison to the continuous-time value. The system is a 20-qubit 1D MFI with $J=1, h_z=1.1,$ and $h_x=1.07$.}
\label{Hplot}
\end{figure}
Fig. \ref{Hplot} illustrates this behavior for the energy $\langle H\rangle$  and the energy variance $\text{Var}(H)=\langle H^2\rangle -\langle H\rangle^2$ evolving under a fixed step size $\tau=0.1$. Conserved quantities in a continuous-time evolution become non-conserved due to the discretization. However the remarkable fact is the deviation does not appear to \emph{grow} with time $t$. In the context of Floquet dynamics \cite{WeidingerKnap2017,HO2023169297,KUWAHARA201696,PhysRevLett.116.120401} where one realizes a Trotterized evolution with a finite step size as a time-dependent periodically driven system, the Trotter error in energy is equivalent to the Floquet heating. It has been shown \cite{PhysRevLett.116.120401} that the heating timescale is exponentially long in the inverse of the step size. Here we obtain a parallel but inequivalent result that says there is no heating up to a time scale proportional to the inverse of the square of the step size \footnote{The analysis holds well for $r||\delta||\ll1$. Since $r||\delta||=O(\tau^2t)$ for the second-order PF, under a fixed step size $\tau=t/r$, it implies $t\ll \frac{1}{\tau^2}$.}, in a second-order PF. 

For local observables that thermalize under the eigenstate thermalization hypothesis (ETH), the expectation value is determined by the so-called diagonal ensemble, i.e., $\langle O\rangle_{\rho(t)}=\sum_{i} \rho_{ii} \langle i|O|i\rangle $ where $|i\rangle$ is the energy eigenbasis of $H$ and $\rho_{ii}=\langle i|\rho|i\rangle$ are the diagonal elements of the initial density matrix $\rho$. This amounts to replacing $\rho(t)\to \sum_{i} \rho_{ii}|i\rangle\langle i|$. Substituting this into Eq. (\ref{obs_err}), we find that the linear-in-$t$ term vanishes because $H_{||}$ is also diagonal in the energy eigenbasis for non-degenerate $H$ as shown in Appendix \ref{app:commutant_deomp}. and $\Tr\{[O,H_{||}]\rho(t)\}= \sum_{i} \rho_{ii} \Tr\{[H_{||},|i\rangle\langle i|]O\}=0$. This agrees with the result in \cite{doi:10.1126/sciadv.aau8342} that the Trotter error in an ETH observable does not grow with time.

\section{Conclusion and discussion}
In summary, we provided a general framework of commutant decomposition to understand the accumulation of Trotter error over time. This generalizes the concept of error interference introduced in \cite{PhysRevLett.124.220502}, i.e., for a general $H$ under a $p$th-order product formula, there are two error components that each scale as $O(t^{p+1}/r^p)$ and $O(t^p/r^p)$. The component with $O(t^{p+1}/r^p)$ adds linearly with the number of steps with no destructive interference whereas the component with $O(t^p/r^p)$ adds rotationally such that there is no net increase. The previous result \cite{PhysRevLett.124.220502,PhysRevLett.128.210501} for $H=H_1+H_2$ with the first-order product formula is then a special case where the $O(t^{2}/r)$ component vanishes. In fact, when the $O(t^{p+1}/r^p)$ term is present, it is \emph{necessary} for a product formula to scale as $O(t^{1+1/p})$ in gate complexity and hence a lower bound. Even though the commutant decomposition does not change the asymptotic scaling with time, one can use it to extract the component with $O(t^p/r^p)$ to improve the error estimate, resulting in less gate cost required in practice. We provided such an example for the second-order product formula. We note that exactly performing the commutant decomposition can be challenging for system sizes beyond exact diagonalization. However, we have shown that one can extract components in $\mathbf{H}_{\perp}$ by identifying terms in the form of $[H,\cdot]$. Exploring methods for the decomposition for more cases remains an open question. We also note that our analysis on the heating in Floquet dynamics is for a second-order product formula where the no-heating effect lasts up to $O(1/\tau^2)$. It may suggest that the no-heating time scale is $O(1/\tau^p)$ for a $p$th-order product formula, which we leave as an open question.

\begin{acknowledgments}
We thank Eli Chertkov and Michael Foss-Feig for valuable discussions and Charlie Baldwin for helpful feedback.
\end{acknowledgments}

%\resetcounter{appendixcounter}[section]

%\widetext
%\clearpage
\appendix
%\begin{center}
%\textbf{\large Supplemental Material for: "Trotter error timescaling separation via commutant decomposition"}
%\end{center}

\setcounter{section}{0}
\setcounter{equation}{0}

\section{Analysis of $(U+\delta)^r-U^r$} \label{app:U_delta}
We note that this is also the starting point of the derivation in \cite{PhysRevLett.124.220502}. Here we give a self-contained explanation for Eq.~\eqref{totalerror1} in the paper, especially showing the higher order terms ignored are indeed $O(r^2||\delta||^2)$. 

First of all, by keeping track of the terms that involve one $\delta$ and no $\delta$ in the expansion $(U+\delta)^r$, we have
\begin{align}
(U+\delta)^r-U^r-\left(\sum_{k=0}^{r-1}U^k \delta U^{-k}\right)U^{r-1}=\sum_{l=2}^{r}\sum_{X\in \mathcal{X}_l}X,
\end{align}
where $\mathcal{X}_l$ is the set of all possible operators involving $l$ $\delta$s, e.g., terms like $X=U\cdots U\delta U\cdots U\delta U\cdots U\in \mathcal{X}_2$. The norm of the above remainder is (by moving the norm inside the sum and using $||U||=1$)
\begin{align}
&\left|\left| (U+\delta)^r-U^r-\left(\sum_{k=0}^{r-1}U^k \delta U^{-k}\right)U^{r-1}\right|\right| \nonumber \\
&\leq \sum_{l=2}^{r} {r \choose l}||\delta||^l \nonumber\\
&=\sum_{l=2}^{r} \frac{r(r-1)\cdots (r-l+1)}{l!}||\delta||^l \nonumber \\
&\leq \sum_{l=2}^{r} \frac{(r||\delta||)^l}{l!}\leq \sum^{\infty}_{l=2} \frac{(r||\delta||)^l}{l!} \leq \frac{\e}{2} r^2||\delta||^2,
\end{align}
where in the last inequality we use Taylor's remainder for the exponential and the assumption $r||\delta||<1$.

\section{Constructive description for the commutant of $H$}\label{app:commutant_deomp}
Here we provide an explicit representation for the commutant of $H$ and its orthogonal complement. We start by noting the following lemma.
\begin{lemma}
Let $H$ be an $d\times d$ Hermitian matrix. Denote $\mathcal{M}_{\mathbb{C}}(d,d)$ the set of all complex-valued $d\times d$ matrices. Define the commutant of H as 
\begin{align}
\mathbf{H}_{||}=\{\forall M \in \mathcal{M}_{\mathbb{C}}(d,d)\ | \ [H,M]=0\},
\end{align} 
i.e., the set of all $d\times d$ matrices that commute with $H$. Then $\mathbf{H}_{||}$ is equal to the space spanned by the eigenprojectors of $H$ if and only if $H$ is non-degenerate.
\end{lemma}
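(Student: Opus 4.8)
The plan is to prove the biconditional by working in an eigenbasis of $H$ and characterizing exactly which matrices commute with a diagonal matrix. First I would diagonalize: since $H$ is Hermitian, write $H = \sum_{a} \lambda_a P_a$ where the $\lambda_a$ are the \emph{distinct} eigenvalues and $P_a$ are the corresponding orthogonal eigenprojectors, so $\sum_a P_a = \mathbb{1}$ and $P_a P_b = \delta_{ab} P_a$. Choosing an orthonormal eigenbasis $\{|i\rangle\}$, any $M \in \mathcal{M}_{\mathbb{C}}(d,d)$ expands as $M = \sum_{i,j} M_{ij} |i\rangle\langle j|$, and a direct computation gives $[H,M] = \sum_{i,j}(\epsilon_i - \epsilon_j) M_{ij} |i\rangle\langle j|$, where $\epsilon_i$ is the eigenvalue attached to $|i\rangle$. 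Hence $[H,M]=0$ if and only if $M_{ij}=0$ whenever $\epsilon_i \neq \epsilon_j$, i.e., $M$ is block-diagonal with respect to the grouping of basis vectors by eigenvalue.

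Next I would translate that block-diagonal characterization into the two directions of the iff. For the ``if'' direction, assume $H$ is non-degenerate, so each eigenvalue $\lambda_a$ has a one-dimensional eigenspace and $P_a = |a\rangle\langle a|$. The block-diagonal condition forces $M$ to be diagonal in this basis, $M = \sum_a M_{aa} |a\rangle\langle a| = \sum_a M_{aa} P_a$, so $\mathbf{H}_{||}$ is exactly $\mathrm{span}\{P_a\}$; conversely every such linear combination visibly commutes with $H$, giving equality. For the ``only if'' direction I would argue by contrapositive: if $H$ is degenerate, some eigenvalue $\lambda_a$ has multiplicity $\geq 2$, so the corresponding block is a full $m\times m$ matrix algebra with $m \geq 2$. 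Pick an off-diagonal unit $|i\rangle\langle j|$ with $i \neq j$ both in that block; it commutes with $H$ (so lies in $\mathbf{H}_{||}$) but is not a linear combination of the eigenprojectors $P_b$ (each $P_b$ is Hermitian and supported block-diagonally, and indeed $P_b$ as a sum of $|k\rangle\langle k|$ over the block cannot produce the off-diagonal term $|i\rangle\langle j|$). Hence $\mathbf{H}_{||} \supsetneq \mathrm{span}\{P_b\}$, and the two spaces differ.

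I would also note the dimension count as a clean alternative to the explicit off-diagonal witness: $\dim \mathbf{H}_{||} = \sum_a m_a^2$ where $m_a$ are the multiplicities, while $\dim \mathrm{span}\{P_a\}$ equals the number of distinct eigenvalues; these agree exactly when every $m_a = 1$, i.e., when $H$ is non-degenerate. Either framing closes the argument. The main thing to be careful about — more a bookkeeping point than a real obstacle — is the subtlety in the degenerate ``only if'' direction: one must exhibit an element of the commutant that genuinely fails to be a combination of eigenprojectors, and the cleanest way is the off-diagonal unit within a degenerate block, since a naive attempt using a diagonal-but-non-projector element would not work (any diagonal matrix in the eigenbasis, when $H$ is non-degenerate, \emph{is} a combination of the $P_a$). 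I expect no genuine difficulty beyond stating this carefully; the whole proof is essentially the standard computation of the commutant of a diagonal matrix together with the observation that the eigenprojector span is precisely the diagonal part of that commutant.
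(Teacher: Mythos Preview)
Your proposal is correct and follows essentially the same approach as the paper: both work in an orthonormal eigenbasis of $H$, compute $[H,M]=\sum_{i,j}(\lambda_i-\lambda_j)M_{ij}|i\rangle\langle j|$, and in the degenerate case exhibit the off-diagonal unit $|i\rangle\langle j|$ within a degenerate block as a commutant element outside the span of eigenprojectors. Your block-diagonal framing and dimension count $\dim\mathbf{H}_{||}=\sum_a m_a^2$ are nice additions but not required, and the core argument is the same as the paper's.
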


\begin{proof}
Since $H$ is Hermitian, its eigenvectors form an orthonormal basis $\{|i\rangle\}$ and any matrix can be represented in this basis, i.e., $M=\sum_{i,j} m_{ij}|i\rangle\langle j|$. Suppose $\mathbf{H}_{||}$ is equal to the space spanned by the eigenprojectors of $H$, i.e., $\mathbf{H}_{||}=\text{Span}\{|i\rangle\langle i|\}$, but there is a degeneracy, i.e., $H|i\rangle=\lambda|i\rangle$ and $H|j\rangle=\lambda|j\rangle$ with $i\neq j$, then consider a matrix $M=|i\rangle\langle j|$,and we have
\begin{align}
[H,M]=HM-MH=(\lambda-\lambda)|i\rangle\langle j|=0.
\end{align}
This contradicts with the initial assumption that $\mathbf{H}_{||}=\text{Span}\{|i\rangle\langle i|\}$.

Conversely, suppose $H$ is non-degenerate. It is clear that $\text{Span}\{|i\rangle\langle i|\}\subset\mathbf{H}_{||}$, since they are diagonal in the same basis as $H$. One only needs to show that for any matrix $M=\sum_{i\neq j}m_{ij} |i\rangle\langle j|\neq 0$, $[H,M]$ is not zero. Indeed,
\begin{align}
[H,M]=\sum_{i\neq j} (\lambda_i-\lambda_j) m_{ij} |i\rangle\langle j| \neq 0,
\end{align}
because $\lambda_i\neq \lambda_j$ for any $i \neq j$. This completes the proof.
\end{proof}

Denote the orthogonal complement of $\mathbf{H}_{||}$ as $\mathbf{H}_{\perp}:=\mathcal{M}_{\mathbb{C}}(d,d)\setminus \mathbf{H}_{||}$. Decomposing any operator $O$ into the commutant component $O_{||}\in  \mathbf{H}_{||}$ and the orthogonal complement $O_{\perp} \in \mathbf{H}_{\perp}$ can be explicitly done by working in the eigenbasis $|i\rangle$ of the Hamiltonian $H$. The procedure is as follows.
\begin{enumerate}
\item Find the eigenspectrum of $H$, i.e., $\{\lambda_i, |i\rangle\}$ the eigenvalues and eigenvectors.
\item Write $O$ in the basis of $|i\rangle$, i.e., $O_{ij}=\langle i|O|j\rangle$.
\item If $H$ is non-degenerate, then $O_{||}=\sum_{i=1}^{d} O_{ii} |i\rangle\langle i|$ is the commutant component of $O$, and $O_{\perp}=O-O_{||}$ is the rest.
\item If $H$ is degenerate, then $O_{||}=\sum_{i=1}^{d}O_{ii} |i\rangle\langle i| + \sum_{(i,j)\in \text{ same deg. space}} O_{ij} |i\rangle\langle j| $, and $O_{\perp}=O-O_{||}$. The sum $\sum_{(i,j)\in \text{ same deg. space}}$ is over all pairs of $i,j$ such that $ |i\rangle$ and $|j\rangle$ correspond to the same eigenvalue.
\end{enumerate}
The above construction directly implies the following corollaries.
\begin{cor}\label{Hperp_rep}
Given the eigenbasis $\{|i\rangle\}$ and the corresponding eigenvalues $\{\lambda_i\}$ of a Hermitian operator $H$, the orthogonal complement $\mathbf{H}_{\perp}$ is
\begin{align}
\mathbf{H}_{\perp}=\text{Span}\{|i\rangle\langle j|\ |\  \lambda_i\neq \lambda_j\}.
\end{align}
\end{cor}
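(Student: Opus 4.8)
The statement to prove is Corollary~\ref{Hperp_rep}, which asserts that $\mathbf{H}_{\perp} = \text{Span}\{|i\rangle\langle j| \mid \lambda_i \neq \lambda_j\}$, given the eigenbasis and eigenvalues of $H$.

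The plan is to read off the claim directly from the explicit construction of the commutant decomposition given in the four-step procedure immediately preceding the corollary, so the ``proof'' is really just an assembly of already-established facts. First I would note that the operator matrix units $\{|i\rangle\langle j|\}_{i,j=1}^d$ form a basis for $\mathcal{M}_{\mathbb{C}}(d,d)$, so the whole space splits as $\text{Span}\{|i\rangle\langle j| \mid \lambda_i = \lambda_j\} \oplus \text{Span}\{|i\rangle\langle j| \mid \lambda_i \neq \lambda_j\}$. Next I would identify the first summand with $\mathbf{H}_{||}$: by the computation $[H,|i\rangle\langle j|] = (\lambda_i - \lambda_j)|i\rangle\langle j|$ already used in the proof of the Lemma, a matrix unit commutes with $H$ exactly when $\lambda_i = \lambda_j$, and a general linear combination $M = \sum m_{ij}|i\rangle\langle j|$ satisfies $[H,M] = \sum (\lambda_i-\lambda_j) m_{ij}|i\rangle\langle j| = 0$ iff $m_{ij} = 0$ whenever $\lambda_i \neq \lambda_j$ (using linear independence of the matrix units); this covers both the non-degenerate and degenerate cases uniformly and matches steps 3 and 4 of the procedure. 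Since $\mathbf{H}_{\perp}$ is defined as the orthogonal complement of $\mathbf{H}_{||}$ with respect to the Hilbert–Schmidt inner product, and since distinct matrix units $|i\rangle\langle j|$ are Hilbert–Schmidt orthonormal, the orthogonal complement of $\text{Span}\{|i\rangle\langle j| \mid \lambda_i = \lambda_j\}$ is precisely $\text{Span}\{|i\rangle\langle j| \mid \lambda_i \neq \lambda_j\}$, giving the corollary.

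There is essentially no obstacle here; the only point requiring a word of care is the reconciliation of the set-theoretic definition $\mathbf{H}_{\perp} := \mathcal{M}_{\mathbb{C}}(d,d) \setminus \mathbf{H}_{||}$ used loosely in the main text with the genuine linear-algebraic orthogonal complement, since a set difference is not a subspace. I would simply remark that throughout we mean the orthogonal complement under the Hilbert–Schmidt (Frobenius) inner product $\langle A, B\rangle = \Tr(A^\dagger B)$, which is the natural structure making $\mathcal{M}_{\mathbb{C}}(d,d) = \mathbf{H}_{||} \oplus \mathbf{H}_{\perp}$ a genuine orthogonal direct sum, and that the matrix units $|i\rangle\langle j|$ are an orthonormal basis for it. With that convention fixed, the corollary is immediate from the basis decomposition above.
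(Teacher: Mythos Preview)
Your proposal is correct and matches the paper's approach: the paper offers no explicit proof of this corollary beyond the sentence ``The above construction directly implies the following corollaries,'' and your argument is precisely the unpacking of that construction via the matrix-unit basis and the computation $[H,|i\rangle\langle j|]=(\lambda_i-\lambda_j)|i\rangle\langle j|$ already used in the Lemma. Your remark clarifying that $\mathbf{H}_{\perp}$ should be read as the Hilbert--Schmidt orthogonal complement rather than a literal set difference is a worthwhile addition the paper leaves implicit.
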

The orthogonal complement space is spanned by the projectors $|i\rangle \langle j|$ such that $i$ and $j$ correspond to different eigenspaces $\lambda_i\neq\lambda_j$.
\begin{cor}\label{ad_in_Hperp}
The action of the commutator $\ad(\cdot)=[H,\cdot]$ on any operator $O$ can be expressed as 
\begin{align}
\ad(O)=[H,O]=\sum_{\lambda_i\neq\lambda_j} (\lambda_i-\lambda_j) O_{ij} |i\rangle\langle j|\in\mathbf{H}_{\perp},
\end{align}
which is completely in the orthogonal complement $\mathbf{H}_{\perp}$. In addition, the inverse action of $\ad$, i.e., $\ad^{-1}$, acts only on the orthogonal component $O_{\perp}\in\mathbf{H}_{\perp}$, and it can be expressed as 
\begin{align}
\ad^{-1}(O_{\perp})=\sum_{\lambda_i\neq\lambda_j} \frac{1}{\lambda_{i}-\lambda_j} O_{ij} |i\rangle\langle j|.
\end{align}
\end{cor}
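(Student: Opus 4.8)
The plan is to establish both claims by one direct computation in the eigenbasis of $H$, reusing the representation from Corollary~\ref{Hperp_rep}. Writing $H=\sum_i\lambda_i|i\rangle\langle i|$ and an arbitrary operator $O=\sum_{i,j}O_{ij}|i\rangle\langle j|$ with $O_{ij}=\langle i|O|j\rangle$, and using $H|i\rangle=\lambda_i|i\rangle$ together with $\langle j|H=\lambda_j\langle j|$, one immediately gets the key relation $[H,|i\rangle\langle j|]=(\lambda_i-\lambda_j)|i\rangle\langle j|$, and hence
\begin{align}
\ad(O)=[H,O]=\sum_{i,j}(\lambda_i-\lambda_j)O_{ij}|i\rangle\langle j|=\sum_{\lambda_i\neq\lambda_j}(\lambda_i-\lambda_j)O_{ij}|i\rangle\langle j|,
\end{align}
where the second equality holds because the $\lambda_i=\lambda_j$ terms carry a vanishing prefactor. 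By Corollary~\ref{Hperp_rep}, every basis element $|i\rangle\langle j|$ with $\lambda_i\neq\lambda_j$ lies in $\mathbf{H}_{\perp}$, so $\ad(O)\in\mathbf{H}_{\perp}$ for all $O$, which is the first assertion.

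For the inverse, I would note that the displayed formula exhibits $\ad$ as a diagonal map on the basis $\{|i\rangle\langle j|\}$ with eigenvalue $\lambda_i-\lambda_j$, which is zero exactly on $\mathbf{H}_{||}$ (spanned by the diagonal $|i\rangle\langle i|$ together with the within-degenerate-block terms, per the construction preceding Corollary~\ref{Hperp_rep}) and nonzero exactly on $\mathbf{H}_{\perp}$. Therefore $\ad$ maps $\mathbf{H}_{\perp}$ bijectively onto itself and cannot be inverted on $\mathbf{H}_{||}$, so $\ad^{-1}$ is naturally defined only on $\mathbf{H}_{\perp}$. Setting $\ad^{-1}(O_{\perp})=\sum_{\lambda_i\neq\lambda_j}\frac{1}{\lambda_i-\lambda_j}O_{ij}|i\rangle\langle j|$ for $O_{\perp}=\sum_{\lambda_i\neq\lambda_j}O_{ij}|i\rangle\langle j|\in\mathbf{H}_{\perp}$ is well posed since each denominator is nonzero by construction, and a term-by-term check of $\ad^{-1}\ad(O_{\perp})$ and $\ad\,\ad^{-1}(O_{\perp})$ returns $O_{\perp}$, which is exactly the stated identity $\ad^{-1}\ad(O_{\perp})=O_{\perp}$ and fixes the claimed formula for $\ad^{-1}$.

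There is no genuinely hard step here: everything follows from $[H,|i\rangle\langle j|]=(\lambda_i-\lambda_j)|i\rangle\langle j|$ plus bookkeeping of which basis elements sit in $\mathbf{H}_{||}$ versus $\mathbf{H}_{\perp}$. The one point that deserves care is the degenerate case, where $\mathbf{H}_{||}$ also contains off-diagonal elements $|i\rangle\langle j|$ with $i\neq j$ but $\lambda_i=\lambda_j$; one must confirm these lie in the kernel of $\ad$ (they do, since then $\lambda_i-\lambda_j=0$), so that the domain of $\ad^{-1}$ is precisely $\mathbf{H}_{\perp}$ and no larger.
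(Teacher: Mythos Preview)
Your argument is correct and is essentially the same as the paper's: the corollary is obtained by writing everything in the eigenbasis $\{|i\rangle\langle j|\}$, observing that $\ad$ acts diagonally with eigenvalue $\lambda_i-\lambda_j$, and reading off the range and inverse from Corollary~\ref{Hperp_rep}. In fact you supply more detail than the paper, which simply states the result and remarks that the identities $\ad^{-1}(\ad(O_{\perp}))=\ad(\ad^{-1}(O_{\perp}))=O_{\perp}$ are easily checked.
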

It can be easily checked that $\ad^{-1}(\ad(O))=O_{\perp}$ and $\ad^{-1}(\ad(O_{\perp}))=\ad(\ad^{-1}(O_{\perp}))=O_{\perp}$. In addition, $\ad$ is a diagonal matrix in the presentation of $H$'s eigenprojectors $|i\rangle\langle j|$ with diagonal elements being $\lambda_i-\lambda_j$ and $\ad^{-1}$ is diagonal with elements $1/(\lambda_i-\lambda_j)$ but only defined on $\mathbf{H}_{\perp}$.

\section{Derivation of Eq.~\eqref{inv_expansion}}\label{app:for_eq6}
Here we justified the expansion in Eq.~\eqref{inv_expansion} in the main text. First note that from Corollary \ref{Hperp_rep}. and \ref{ad_in_Hperp}. in Appendix \ref{app:commutant_deomp}, we can view $\ad$ as a diagonal matrix and $(1-\e^{-i\frac{t}{r}\ad})^{-1}$ amounts to finding the inverse of each diagonal element. To simplified the notation, for a diagonal element $x$ of $\ad$, the diagonal element of $(1-\e^{-i\frac{t}{r}\ad})^{-1}$ is 
\begin{align}
\frac{1}{1-\e^{-i\tau x}},
\end{align}
where $\tau=t/r$ is the step size that is small. Therefore, we evaluate 
\begin{align}
\frac{1}{\tau}\(\frac{\tau}{1-\e^{-i\tau x}}\)=\frac{1}{\tau}\[\frac{-i}{x}+\frac{1}{2}\tau+\frac{ix}{12}\tau^2+O\(\tau^3\)\].
\end{align}
The constant order term in the bracket is proportional to the diagonal element of $\ad^{-1}$, i.e., simply the inverse of the corresponding diagonal element of $\ad$. Therefore,
\begin{align}
(1-\e^{-i\tau\ad})^{-1}=\frac{1}{\tau}\[-i\ad^{-1}+O\(\tau\)\],
\end{align}
which implies Eq.\eqref{inv_expansion}. The reason to have an expansion with leading order $1/\tau$ is this operator will act on error terms of order $\tau^{p+1}$ such that the final value is $O\(\tau^p\)$ which is small.
%To evaluate the trotter error in terms of the correction in the effective Floquet Hamiltonian, we first denote
%\begin{align}
%\e^{-iH_F\tau}:=\e^{-iH_2\tau/2}\e^{-iH_1\tau}\e^{-iH_2\tau/2}.
%\end{align}
%Repeatedly using BCH formula, we get
%\begin{align}
%&\e^{-iH_F\tau}=\e^{-i\frac{\tau}{2}H_2}\e^{-i\tau H_1-\frac{i\tau}{2}H_2 +\frac{1}{2}[H_1,H_2]\frac{-\tau^2}{2}+\frac{1}{12}(2[H_1,[H_1,H_2]]+[H_2,[H_2,H_1]])\frac{i\tau^3}{4}-\frac{1}{24}[H_2,[H_1,[H_1,H_2]]]\frac{\tau^4}{4}+O(\tau^5)} \\
%&\ \ \ \ \ \ \ \ \  =\e^{-iH\tau+\frac{i}{24}[H,[H_1,H_2]]\tau^3+O(\tau^5)} \\
%&\implies H_F=H+\frac{1}{24}[H,[H_2,H_1]]\tau^2+O(\tau^4).
%\end{align}
%Note that the 3rd order term vanishes.

%\clearpage
\section{Leading error terms in 2nd order product formula}\label{PF2_error_term}
Suppose the Hamiltonian is a sum of two non-commuting blocks, i.e., 
\begin{align}
H=H_1+H_2,
\end{align}
then the second-order product formula has leading order terms that are third order in step size. To see this, we first define functions
\begin{align}
&U_{prod}(\tau)=\e^{-iH_2\tau/2}\e^{-iH_1\tau}\e^{-iH_2\tau/2} \\
&U(\tau)=\e^{-i(H_1+H_2)\tau} \\
&\delta(\tau)=U_{prod}(\tau)-U(\tau).
\end{align}
$\delta$ is the Trotter error per step and its derivatives evaluated at $\tau=0$ give the expansion of Trotter error in the step size $\tau$, i.e.,
\begin{align}
\frac{d}{d\tau}\delta(\tau)&=\frac{-i}{2}H_2U_{prod}(\tau)+\e^{-i[H_2,\cdot]\tau/2}(-iH_1)U_{prod}(\tau) \nonumber\\
&\ \ \ \ +U_{prod}(\tau)\frac{-i}{2}H_2+i(H_1+H_2)U(\tau) \nonumber\\
&\implies \frac{d}{d\tau} \delta(\tau)\Big|_{\tau=0}=0. 
\end{align}

\begin{align}
\frac{d^2}{d\tau^2}\delta(\tau)&= \frac{-i}{2}H_2U'_{prod}+\frac{-i}{2}\left[H_2,\e^{-i[H_2,\cdot]\tau/2}(-iH_1)\right]U_{prod} \nonumber\\
&\ \ \  + \e^{-i[H_2,\cdot]\tau/2}(-iH_1)U'_{prod}+U'_{prod}\frac{-i}{2}H_2 \nonumber\\
&\ \ \ +(H_1+H_2)^2U\nonumber \\ 
&\implies \frac{d^2}{d\tau^2}\delta(\tau)\Big|_{\tau=0}=0
\end{align}

\begin{align}
&\frac{d^3}{d\tau^3}\delta(\tau)=-\frac{i}{2}H_2U''_{prod} \nonumber\\
&\ \ \ \ \ \ \ \ \ \  \ \ +\left(\frac{-i}{2}\right)^2\left[H_2,\left[H_2,\e^{-i[H_2,\cdot]\tau/2}(-iH_1)\right]\right]U_{prod}\nonumber\\
&\ \ \ \ \ \ \ \ \ \  \ \ \ +2\frac{-i}{2}\left[H_2,\e^{-i[H_2,\cdot]\tau/2}(-iH_1)\right]U'_{prod} \nonumber\\
&\ \ \ \ \ \ \ \ \ \  \ \ \ +\e^{-i[H_2,\cdot]\tau/2}(-iH_1)U''_{prod}+\frac{-i}{2}U''_{prod}H_2\nonumber\\
&\ \ \ \ \ \ \ \ \ \  \ \ \ -i(H_1+H_2)^3U \\
&\implies \frac{d^3}{d\tau^3}\delta(\tau)\Big|_{\tau=0}\nonumber\\
&\ \ \ \ \ \ \ \ \ \ \ \ =\frac{i}{2}H_2(H_1+H_2)^2+\frac{i}{4}\left[H_2,\left[H_2,H_1\right]\right]\nonumber\\
&\ \ \ \ \ \ \ \ \ \ \ \ \ \ \  +i[H_2,H_1](H_1+H_2)+i H_1(H_1+H_2)^2 \nonumber\\
&\ \ \ \ \ \ \ \ \ \ \ \ \ \ \  +\frac{i}{2}(H_1+H_2)^2H_2-i(H_1+H_2)^3 \nonumber\\
&\ \ \ \ \ \ \ \ \ \ \ \ =\frac{i}{4}[H_2,[H_1,H_2]]+\frac{i}{2}[H_1,[H_1,H_2]]
\end{align}
Therefore, the leading order of Trotter error per step is 
\begin{align}
&\delta(\tau)=\frac{1}{3!}\frac{d^3\delta}{d\tau^3}\Big|_{\tau=0} \tau^3+O(\tau^4)\nonumber\\
&=\left(\frac{i}{24}[H_2,[H_1,H_2]]+\frac{i}{12}[H_1,[H_1,H_2]]\right)\tau^3+O(\tau^4) \nonumber \\
&=\(\frac{i}{24}[H_1,[H_1,H_2]]+\frac{i}{24}[H,[H,H_2]]\)\tau^3+O(\tau^4). 
\end{align}

\section{Trotter error in observables}\label{app:obs_trot_err}
To compute Trotter error in observables, we first evaluate the operator difference in the Heisenberg picture, i.e.,
\begin{align}
&(U+\delta)^{-r}O(U+\delta)^r-U^{-r}OU^r\\
&= \left[U^{-r}+U^{-r+1}\Delta^{\dagger}\right]O\left[U^{r}+\Delta U^{r-1} \right]- U^{-r}OU^r\nonumber\\
&\ \ \  +O\left(\frac{t^{p+1}}{r^{p+1}}\right)\\
&=U^{-r}(U\Delta^{\dagger}O+O\Delta U^{-1})U^r + O\left(\frac{t^{p+1}}{r^{p+1}}\right) \\
&=U^{-r}(\Delta^{\dagger}O+O\Delta )U^r+ O\left(\frac{t^{p+2}}{r^{p+1}}\right),
\end{align}
where the last equation uses $U=I-i\tau H+O[\tau^2]$ and the fact that $\Delta$ is of order $t^{p+1}/r^p$. To evaluate the observable Trotter on a density matrix $\rho$, we have 
\begin{align}
\triangle\langle O\rangle_{\rho}&:=\Tr\left[(U+\delta)^{-r}O(U+\delta)^r\rho-U^{-r}OU^r\rho\right] \nonumber\\
&=\Tr\left[(\Delta^{\dagger}O+O\Delta)(U^r\rho U^{-r})\right]+ O\left(\frac{t^{p+2}}{r^{p+1}}\right).
\end{align}
For a Hamiltonian with two non-commuting blocks $H=H_1+H_2$, the error per step of a second-order product formula is
\begin{align}
\delta&=i\left(\frac{1}{24}[H_2,[H_1,H_2]]+\frac{1}{12}[H_1,[H_1,H_2]]\right)\frac{t^3}{r^3}+O\left(\frac{t^4}{r^4}\right) \nonumber\\
&:= i( H_{||}+ H_{\perp})\frac{t^3}{r^3}+O\left(\frac{t^4}{r^4}\right).
\end{align}
Notice that both terms $[H_2,[H_1,H_2]]$ and $[H_1,[H_1,H_2]]$ are Hermitian and therefore the commutant decomposition must have $H_{||}$ and $H_{\perp}$ being Hermitian---since complex conjugation leaves subspaces $\mathbf{H}_{||}$ and $\mathbf{H}_{\perp}$ invariant, $H_{||}$ and $H_{\perp}$ must be individually Hermitian when $H_{||}+H_{\perp}$ is. The cumulative error is
\begin{align}
&\Delta=iH_{||}\frac{t^3}{r^2}+\left[\ad^{-1}(H_{\perp})-\e^{-iHt}\ad^{-1}(H_{\perp})\e^{iHt}\right]\frac{t^2}{r^2}\nonumber\\
&\ \ \ \ \ \  +O\left(\frac{t^4}{r^3}\right),
\end{align}
and the Trotter error of the observable is
\begin{align}
&\triangle\langle O\rangle_{\rho}=\Tr\left[(\Delta^{\dagger}O+O\Delta)(U^r\rho U^{-r})\right]+ O\left(\frac{t^4}{r^3}\right)\nonumber \\
&=\Tr\Bigg[ \Big(i\left[O,H_{||}\right]\frac{t^3}{r^2} \nonumber \\
&\ \ \ \ \ \ \ \ + \big[O,\ad^{-1}(H_{\perp})-\e^{-iHt}\ad^{-1}(H_{\perp})\e^{iHt}\big]\frac{t^2}{r^2}\Big)\nonumber \\
&\ \ \ \ \ \ \ \  \times \left(\e^{-iHt}\rho \e^{iHt}\right) \Bigg]+O\left(\frac{t^4}{r^3}\right).
\end{align}

\bibliography{rf.bib}

%apsrev4-2.bst 2019-01-14 (MD) hand-edited version of apsrev4-1.bst
%Control: key (0)
%Control: author (8) initials jnrlst
%Control: editor formatted (1) identically to author
%Control: production of article title (0) allowed
%Control: page (0) single
%Control: year (1) truncated
%Control: production of eprint (0) enabled
\providecommand{\noopsort}[1]{}\providecommand{\singleletter}[1]{#1}%
\begin{thebibliography}{26}%
\makeatletter
\providecommand \@ifxundefined [1]{%
 \@ifx{#1\undefined}
}%
\providecommand \@ifnum [1]{%
 \ifnum #1\expandafter \@firstoftwo
 \else \expandafter \@secondoftwo
 \fi
}%
\providecommand \@ifx [1]{%
 \ifx #1\expandafter \@firstoftwo
 \else \expandafter \@secondoftwo
 \fi
}%
\providecommand \natexlab [1]{#1}%
\providecommand \enquote  [1]{``#1''}%
\providecommand \bibnamefont  [1]{#1}%
\providecommand \bibfnamefont [1]{#1}%
\providecommand \citenamefont [1]{#1}%
\providecommand \href@noop [0]{\@secondoftwo}%
\providecommand \href [0]{\begingroup \@sanitize@url \@href}%
\providecommand \@href[1]{\@@startlink{#1}\@@href}%
\providecommand \@@href[1]{\endgroup#1\@@endlink}%
\providecommand \@sanitize@url [0]{\catcode `\\12\catcode `\$12\catcode
  `\&12\catcode `\#12\catcode `\^12\catcode `\_12\catcode `\%12\relax}%
\providecommand \@@startlink[1]{}%
\providecommand \@@endlink[0]{}%
\providecommand \url  [0]{\begingroup\@sanitize@url \@url }%
\providecommand \@url [1]{\endgroup\@href {#1}{\urlprefix }}%
\providecommand \urlprefix  [0]{URL }%
\providecommand \Eprint [0]{\href }%
\providecommand \doibase [0]{https://doi.org/}%
\providecommand \selectlanguage [0]{\@gobble}%
\providecommand \bibinfo  [0]{\@secondoftwo}%
\providecommand \bibfield  [0]{\@secondoftwo}%
\providecommand \translation [1]{[#1]}%
\providecommand \BibitemOpen [0]{}%
\providecommand \bibitemStop [0]{}%
\providecommand \bibitemNoStop [0]{.\EOS\space}%
\providecommand \EOS [0]{\spacefactor3000\relax}%
\providecommand \BibitemShut  [1]{\csname bibitem#1\endcsname}%
\let\auto@bib@innerbib\@empty
%</preamble>
\bibitem [{\citenamefont {Lloyd}(1996)}]{Lloyd1996}%
  \BibitemOpen
  \bibfield  {author} {\bibinfo {author} {\bibfnamefont {S.}~\bibnamefont
  {Lloyd}},\ }\bibfield  {title} {\bibinfo {title} {Universal quantum
  simulators},\ }\href {https://doi.org/10.1126/science.273.5278.1073}
  {\bibfield  {journal} {\bibinfo  {journal} {Science}\ }\textbf {\bibinfo
  {volume} {273}},\ \bibinfo {pages} {1073} (\bibinfo {year}
  {1996})}\BibitemShut {NoStop}%
\bibitem [{\citenamefont {Trotter}(1959)}]{trotter}%
  \BibitemOpen
  \bibfield  {author} {\bibinfo {author} {\bibfnamefont {H.~F.}\ \bibnamefont
  {Trotter}},\ }\bibfield  {title} {\bibinfo {title} {On the product of
  semi-groups of operators},\ }\href {http://www.jstor.org/stable/2033649}
  {\bibfield  {journal} {\bibinfo  {journal} {Proceedings of the American
  Mathematical Society}\ }\textbf {\bibinfo {volume} {10}},\ \bibinfo {pages}
  {545} (\bibinfo {year} {1959})}\BibitemShut {NoStop}%
\bibitem [{\citenamefont {Berry}\ \emph {et~al.}(2015)\citenamefont {Berry},
  \citenamefont {Childs}, \citenamefont {Cleve}, \citenamefont {Kothari},\ and\
  \citenamefont {Somma}}]{LCU2015}%
  \BibitemOpen
  \bibfield  {author} {\bibinfo {author} {\bibfnamefont {D.~W.}\ \bibnamefont
  {Berry}}, \bibinfo {author} {\bibfnamefont {A.~M.}\ \bibnamefont {Childs}},
  \bibinfo {author} {\bibfnamefont {R.}~\bibnamefont {Cleve}}, \bibinfo
  {author} {\bibfnamefont {R.}~\bibnamefont {Kothari}},\ and\ \bibinfo {author}
  {\bibfnamefont {R.~D.}\ \bibnamefont {Somma}},\ }\bibfield  {title} {\bibinfo
  {title} {Simulating hamiltonian dynamics with a truncated taylor series},\
  }\href {https://doi.org/10.1103/PhysRevLett.114.090502} {\bibfield  {journal}
  {\bibinfo  {journal} {Phys. Rev. Lett.}\ }\textbf {\bibinfo {volume} {114}},\
  \bibinfo {pages} {090502} (\bibinfo {year} {2015})}\BibitemShut {NoStop}%
\bibitem [{\citenamefont {Low}\ and\ \citenamefont {Chuang}(2017)}]{QSP2017}%
  \BibitemOpen
  \bibfield  {author} {\bibinfo {author} {\bibfnamefont {G.~H.}\ \bibnamefont
  {Low}}\ and\ \bibinfo {author} {\bibfnamefont {I.~L.}\ \bibnamefont
  {Chuang}},\ }\bibfield  {title} {\bibinfo {title} {Optimal hamiltonian
  simulation by quantum signal processing},\ }\href
  {https://doi.org/10.1103/PhysRevLett.118.010501} {\bibfield  {journal}
  {\bibinfo  {journal} {Phys. Rev. Lett.}\ }\textbf {\bibinfo {volume} {118}},\
  \bibinfo {pages} {010501} (\bibinfo {year} {2017})}\BibitemShut {NoStop}%
\bibitem [{\citenamefont {Childs}\ \emph {et~al.}(2018)\citenamefont {Childs},
  \citenamefont {Maslov}, \citenamefont {Nam}, \citenamefont {Ross},\ and\
  \citenamefont {Su}}]{Childs2018}%
  \BibitemOpen
  \bibfield  {author} {\bibinfo {author} {\bibfnamefont {A.~M.}\ \bibnamefont
  {Childs}}, \bibinfo {author} {\bibfnamefont {D.}~\bibnamefont {Maslov}},
  \bibinfo {author} {\bibfnamefont {Y.}~\bibnamefont {Nam}}, \bibinfo {author}
  {\bibfnamefont {N.~J.}\ \bibnamefont {Ross}},\ and\ \bibinfo {author}
  {\bibfnamefont {Y.}~\bibnamefont {Su}},\ }\bibfield  {title} {\bibinfo
  {title} {Toward the first quantum simulation with quantum speedup},\ }\href
  {https://doi.org/10.1073/pnas.1801723115} {\bibfield  {journal} {\bibinfo
  {journal} {Proceedings of the National Academy of Sciences}\ }\textbf
  {\bibinfo {volume} {115}},\ \bibinfo {pages} {9456} (\bibinfo {year}
  {2018})}\BibitemShut {NoStop}%
\bibitem [{\citenamefont {Suzuki}(1990)}]{SUZUKI1990319}%
  \BibitemOpen
  \bibfield  {author} {\bibinfo {author} {\bibfnamefont {M.}~\bibnamefont
  {Suzuki}},\ }\bibfield  {title} {\bibinfo {title} {Fractal decomposition of
  exponential operators with applications to many-body theories and monte carlo
  simulations},\ }\href
  {https://doi.org/https://doi.org/10.1016/0375-9601(90)90962-N} {\bibfield
  {journal} {\bibinfo  {journal} {Physics Letters A}\ }\textbf {\bibinfo
  {volume} {146}},\ \bibinfo {pages} {319} (\bibinfo {year}
  {1990})}\BibitemShut {NoStop}%
\bibitem [{\citenamefont {Hatano}\ and\ \citenamefont
  {Suzuki}(2005)}]{Hatano2005}%
  \BibitemOpen
  \bibfield  {author} {\bibinfo {author} {\bibfnamefont {N.}~\bibnamefont
  {Hatano}}\ and\ \bibinfo {author} {\bibfnamefont {M.}~\bibnamefont
  {Suzuki}},\ }\bibinfo {title} {Finding exponential product formulas of higher
  orders},\ in\ \href {https://doi.org/10.1007/11526216_2} {\emph {\bibinfo
  {booktitle} {Quantum Annealing and Other Optimization Methods}}},\ \bibinfo
  {editor} {edited by\ \bibinfo {editor} {\bibfnamefont {A.}~\bibnamefont
  {Das}}\ and\ \bibinfo {editor} {\bibfnamefont {B.}~\bibnamefont
  {K.~Chakrabarti}}}\ (\bibinfo  {publisher} {Springer Berlin Heidelberg},\
  \bibinfo {address} {Berlin, Heidelberg},\ \bibinfo {year} {2005})\ pp.\
  \bibinfo {pages} {37--68}\BibitemShut {NoStop}%
\bibitem [{\citenamefont {Childs}\ \emph {et~al.}(2019)\citenamefont {Childs},
  \citenamefont {Ostrander},\ and\ \citenamefont
  {Su}}]{Childs2019fasterquantum}%
  \BibitemOpen
  \bibfield  {author} {\bibinfo {author} {\bibfnamefont {A.~M.}\ \bibnamefont
  {Childs}}, \bibinfo {author} {\bibfnamefont {A.}~\bibnamefont {Ostrander}},\
  and\ \bibinfo {author} {\bibfnamefont {Y.}~\bibnamefont {Su}},\ }\bibfield
  {title} {\bibinfo {title} {Faster quantum simulation by randomization},\
  }\href {https://doi.org/10.22331/q-2019-09-02-182} {\bibfield  {journal}
  {\bibinfo  {journal} {{Quantum}}\ }\textbf {\bibinfo {volume} {3}},\ \bibinfo
  {pages} {182} (\bibinfo {year} {2019})}\BibitemShut {NoStop}%
\bibitem [{\citenamefont {Low}\ \emph {et~al.}(2019)\citenamefont {Low},
  \citenamefont {Kliuchnikov},\ and\ \citenamefont {Wiebe}}]{low2019}%
  \BibitemOpen
  \bibfield  {author} {\bibinfo {author} {\bibfnamefont {G.~H.}\ \bibnamefont
  {Low}}, \bibinfo {author} {\bibfnamefont {V.}~\bibnamefont {Kliuchnikov}},\
  and\ \bibinfo {author} {\bibfnamefont {N.}~\bibnamefont {Wiebe}},\ }\href
  {https://arxiv.org/abs/1907.11679} {\bibinfo {title} {Well-conditioned
  multiproduct hamiltonian simulation}} (\bibinfo {year} {2019}),\ \Eprint
  {https://arxiv.org/abs/1907.11679} {arXiv:1907.11679 [quant-ph]} \BibitemShut
  {NoStop}%
\bibitem [{\citenamefont {Carrera~Vazquez}\ \emph {et~al.}(2023)\citenamefont
  {Carrera~Vazquez}, \citenamefont {Egger}, \citenamefont {Ochsner},\ and\
  \citenamefont {Woerner}}]{Carrera2023}%
  \BibitemOpen
  \bibfield  {author} {\bibinfo {author} {\bibfnamefont {A.}~\bibnamefont
  {Carrera~Vazquez}}, \bibinfo {author} {\bibfnamefont {D.~J.}\ \bibnamefont
  {Egger}}, \bibinfo {author} {\bibfnamefont {D.}~\bibnamefont {Ochsner}},\
  and\ \bibinfo {author} {\bibfnamefont {S.}~\bibnamefont {Woerner}},\
  }\bibfield  {title} {\bibinfo {title} {Well-conditioned multi-product
  formulas for hardware-friendly {H}amiltonian simulation},\ }\href
  {https://doi.org/10.22331/q-2023-07-25-1067} {\bibfield  {journal} {\bibinfo
  {journal} {{Quantum}}\ }\textbf {\bibinfo {volume} {7}},\ \bibinfo {pages}
  {1067} (\bibinfo {year} {2023})}\BibitemShut {NoStop}%
\bibitem [{\citenamefont {Mc~Keever}\ and\ \citenamefont
  {Lubasch}(2023)}]{lubi2023}%
  \BibitemOpen
  \bibfield  {author} {\bibinfo {author} {\bibfnamefont {C.}~\bibnamefont
  {Mc~Keever}}\ and\ \bibinfo {author} {\bibfnamefont {M.}~\bibnamefont
  {Lubasch}},\ }\bibfield  {title} {\bibinfo {title} {Classically optimized
  hamiltonian simulation},\ }\href
  {https://doi.org/10.1103/PhysRevResearch.5.023146} {\bibfield  {journal}
  {\bibinfo  {journal} {Phys. Rev. Res.}\ }\textbf {\bibinfo {volume} {5}},\
  \bibinfo {pages} {023146} (\bibinfo {year} {2023})}\BibitemShut {NoStop}%
\bibitem [{\citenamefont {Childs}\ \emph {et~al.}(2021)\citenamefont {Childs},
  \citenamefont {Su}, \citenamefont {Tran}, \citenamefont {Wiebe},\ and\
  \citenamefont {Zhu}}]{PhysRevX.11.011020}%
  \BibitemOpen
  \bibfield  {author} {\bibinfo {author} {\bibfnamefont {A.~M.}\ \bibnamefont
  {Childs}}, \bibinfo {author} {\bibfnamefont {Y.}~\bibnamefont {Su}}, \bibinfo
  {author} {\bibfnamefont {M.~C.}\ \bibnamefont {Tran}}, \bibinfo {author}
  {\bibfnamefont {N.}~\bibnamefont {Wiebe}},\ and\ \bibinfo {author}
  {\bibfnamefont {S.}~\bibnamefont {Zhu}},\ }\bibfield  {title} {\bibinfo
  {title} {Theory of trotter error with commutator scaling},\ }\href
  {https://doi.org/10.1103/PhysRevX.11.011020} {\bibfield  {journal} {\bibinfo
  {journal} {Phys. Rev. X}\ }\textbf {\bibinfo {volume} {11}},\ \bibinfo
  {pages} {011020} (\bibinfo {year} {2021})}\BibitemShut {NoStop}%
\bibitem [{\citenamefont {Childs}\ and\ \citenamefont
  {Su}(2019)}]{PhysRevLett.123.050503}%
  \BibitemOpen
  \bibfield  {author} {\bibinfo {author} {\bibfnamefont {A.~M.}\ \bibnamefont
  {Childs}}\ and\ \bibinfo {author} {\bibfnamefont {Y.}~\bibnamefont {Su}},\
  }\bibfield  {title} {\bibinfo {title} {Nearly optimal lattice simulation by
  product formulas},\ }\href {https://doi.org/10.1103/PhysRevLett.123.050503}
  {\bibfield  {journal} {\bibinfo  {journal} {Phys. Rev. Lett.}\ }\textbf
  {\bibinfo {volume} {123}},\ \bibinfo {pages} {050503} (\bibinfo {year}
  {2019})}\BibitemShut {NoStop}%
\bibitem [{\citenamefont {Tran}\ \emph {et~al.}(2020)\citenamefont {Tran},
  \citenamefont {Chu}, \citenamefont {Su}, \citenamefont {Childs},\ and\
  \citenamefont {Gorshkov}}]{PhysRevLett.124.220502}%
  \BibitemOpen
  \bibfield  {author} {\bibinfo {author} {\bibfnamefont {M.~C.}\ \bibnamefont
  {Tran}}, \bibinfo {author} {\bibfnamefont {S.-K.}\ \bibnamefont {Chu}},
  \bibinfo {author} {\bibfnamefont {Y.}~\bibnamefont {Su}}, \bibinfo {author}
  {\bibfnamefont {A.~M.}\ \bibnamefont {Childs}},\ and\ \bibinfo {author}
  {\bibfnamefont {A.~V.}\ \bibnamefont {Gorshkov}},\ }\bibfield  {title}
  {\bibinfo {title} {Destructive error interference in product-formula lattice
  simulation},\ }\href {https://doi.org/10.1103/PhysRevLett.124.220502}
  {\bibfield  {journal} {\bibinfo  {journal} {Phys. Rev. Lett.}\ }\textbf
  {\bibinfo {volume} {124}},\ \bibinfo {pages} {220502} (\bibinfo {year}
  {2020})}\BibitemShut {NoStop}%
\bibitem [{\citenamefont {Layden}(2022)}]{PhysRevLett.128.210501}%
  \BibitemOpen
  \bibfield  {author} {\bibinfo {author} {\bibfnamefont {D.}~\bibnamefont
  {Layden}},\ }\bibfield  {title} {\bibinfo {title} {First-order trotter error
  from a second-order perspective},\ }\href
  {https://doi.org/10.1103/PhysRevLett.128.210501} {\bibfield  {journal}
  {\bibinfo  {journal} {Phys. Rev. Lett.}\ }\textbf {\bibinfo {volume} {128}},\
  \bibinfo {pages} {210501} (\bibinfo {year} {2022})}\BibitemShut {NoStop}%
\bibitem [{\citenamefont {Heyl}\ \emph {et~al.}(2019)\citenamefont {Heyl},
  \citenamefont {Hauke},\ and\ \citenamefont
  {Zoller}}]{doi:10.1126/sciadv.aau8342}%
  \BibitemOpen
  \bibfield  {author} {\bibinfo {author} {\bibfnamefont {M.}~\bibnamefont
  {Heyl}}, \bibinfo {author} {\bibfnamefont {P.}~\bibnamefont {Hauke}},\ and\
  \bibinfo {author} {\bibfnamefont {P.}~\bibnamefont {Zoller}},\ }\bibfield
  {title} {\bibinfo {title} {Quantum localization bounds trotter errors in
  digital quantum simulation},\ }\href {https://doi.org/10.1126/sciadv.aau8342}
  {\bibfield  {journal} {\bibinfo  {journal} {Science Advances}\ }\textbf
  {\bibinfo {volume} {5}},\ \bibinfo {pages} {eaau8342} (\bibinfo {year}
  {2019})}\BibitemShut {NoStop}%
\bibitem [{\citenamefont {Kivlichan}\ \emph {et~al.}(2020)\citenamefont
  {Kivlichan}, \citenamefont {Gidney}, \citenamefont {Berry}, \citenamefont
  {Wiebe}, \citenamefont {McClean}, \citenamefont {Sun}, \citenamefont {Jiang},
  \citenamefont {Rubin}, \citenamefont {Fowler}, \citenamefont {Aspuru-Guzik},
  \citenamefont {Neven},\ and\ \citenamefont
  {Babbush}}]{Kivlichan2020improvedfault}%
  \BibitemOpen
  \bibfield  {author} {\bibinfo {author} {\bibfnamefont {I.~D.}\ \bibnamefont
  {Kivlichan}}, \bibinfo {author} {\bibfnamefont {C.}~\bibnamefont {Gidney}},
  \bibinfo {author} {\bibfnamefont {D.~W.}\ \bibnamefont {Berry}}, \bibinfo
  {author} {\bibfnamefont {N.}~\bibnamefont {Wiebe}}, \bibinfo {author}
  {\bibfnamefont {J.}~\bibnamefont {McClean}}, \bibinfo {author} {\bibfnamefont
  {W.}~\bibnamefont {Sun}}, \bibinfo {author} {\bibfnamefont {Z.}~\bibnamefont
  {Jiang}}, \bibinfo {author} {\bibfnamefont {N.}~\bibnamefont {Rubin}},
  \bibinfo {author} {\bibfnamefont {A.}~\bibnamefont {Fowler}}, \bibinfo
  {author} {\bibfnamefont {A.}~\bibnamefont {Aspuru-Guzik}}, \bibinfo {author}
  {\bibfnamefont {H.}~\bibnamefont {Neven}},\ and\ \bibinfo {author}
  {\bibfnamefont {R.}~\bibnamefont {Babbush}},\ }\bibfield  {title} {\bibinfo
  {title} {Improved {F}ault-{T}olerant {Q}uantum {S}imulation of
  {C}ondensed-{P}hase {C}orrelated {E}lectrons via {T}rotterization},\ }\href
  {https://doi.org/10.22331/q-2020-07-16-296} {\bibfield  {journal} {\bibinfo
  {journal} {{Quantum}}\ }\textbf {\bibinfo {volume} {4}},\ \bibinfo {pages}
  {296} (\bibinfo {year} {2020})}\BibitemShut {NoStop}%
\bibitem [{Note1()}]{Note1}%
  \BibitemOpen
  \bibinfo {note} {Observe that this is a geometric series applied to the
  operator $\protect \text {ad}_H$, which is a diagonal matrix in $H$'s
  eigen-operator representation $|i\rangle \langle j|$ (Appendix~\ref
  {app:commutant_deomp}). The geometric series is then individually applied to
  each diagonal element.}\BibitemShut {Stop}%
\bibitem [{\citenamefont {Haah}\ \emph {et~al.}(2023)\citenamefont {Haah},
  \citenamefont {Hastings}, \citenamefont {Kothari},\ and\ \citenamefont
  {Low}}]{doi:10.1137/18M1231511}%
  \BibitemOpen
  \bibfield  {author} {\bibinfo {author} {\bibfnamefont {J.}~\bibnamefont
  {Haah}}, \bibinfo {author} {\bibfnamefont {M.~B.}\ \bibnamefont {Hastings}},
  \bibinfo {author} {\bibfnamefont {R.}~\bibnamefont {Kothari}},\ and\ \bibinfo
  {author} {\bibfnamefont {G.~H.}\ \bibnamefont {Low}},\ }\bibfield  {title}
  {\bibinfo {title} {Quantum algorithm for simulating real time evolution of
  lattice hamiltonians},\ }\href {https://doi.org/10.1137/18M1231511}
  {\bibfield  {journal} {\bibinfo  {journal} {SIAM Journal on Computing}\
  }\textbf {\bibinfo {volume} {52}},\ \bibinfo {pages} {FOCS18} (\bibinfo
  {year} {2023})}\BibitemShut {NoStop}%
\bibitem [{Note2()}]{Note2}%
  \BibitemOpen
  \bibinfo {note} {Setting the right-hand side of Eq.~\protect \eqref
  {existingbound} as 0.01 the required total accuracy, and using $\tau =t/r$
  with total time $t=400$, we deduce the required number of steps is $\protect
  \sqrt {\protect \frac {t^3}{0.01}\(\protect \frac {1}{12}\left |\left
  |[H_1,[H_1,H_2]]\right |\right |+\protect \frac {1}{24}\left |\left
  |[H_2,[H_2,H_1]]\right |\right |\)}$ resulting in 545244 steps required using
  exisiting bound Eq.~\protect \eqref {existingbound}. Similarly, setting the
  right-hand side of Eq.~\protect \eqref {tighterbound1} as 0.01 the required
  accuracy, we deduce the required steps as $\protect \sqrt {\protect \frac
  {1}{0.01}\(\protect \frac {1}{24}\left |\left |[H_1,[H_1,H_2]]\right |\right
  |t^3+\protect \frac {1}{12}\left |\left |[H_1,H_2]\right |\right
  |t^2\)}$}\BibitemShut {NoStop}%
\bibitem [{Note3()}]{Note3}%
  \BibitemOpen
  \bibinfo {note} {To clarify, the value of the $O(\tau ^2)$ term is a
  complicated combination of many oscillations and it could look linear in a
  particular period under certain circumstances, but we call it constant-in-t
  in the sense that its overall effect is oscillation that does not grow with
  time asymptotically}\BibitemShut {NoStop}%
\bibitem [{\citenamefont {Weidinger}\ and\ \citenamefont
  {Knap}(2017)}]{WeidingerKnap2017}%
  \BibitemOpen
  \bibfield  {author} {\bibinfo {author} {\bibfnamefont {S.~A.}\ \bibnamefont
  {Weidinger}}\ and\ \bibinfo {author} {\bibfnamefont {M.}~\bibnamefont
  {Knap}},\ }\bibfield  {title} {\bibinfo {title} {Floquet prethermalization
  and regimes of heating in a periodically driven, interacting quantum
  system},\ }\href {https://doi.org/10.1038/srep45382} {\bibfield  {journal}
  {\bibinfo  {journal} {Scientific Reports}\ }\textbf {\bibinfo {volume} {7}},\
  \bibinfo {pages} {45382} (\bibinfo {year} {2017})}\BibitemShut {NoStop}%
\bibitem [{\citenamefont {Ho}\ \emph {et~al.}(2023)\citenamefont {Ho},
  \citenamefont {Mori}, \citenamefont {Abanin},\ and\ \citenamefont {{Dalla
  Torre}}}]{HO2023169297}%
  \BibitemOpen
  \bibfield  {author} {\bibinfo {author} {\bibfnamefont {W.~W.}\ \bibnamefont
  {Ho}}, \bibinfo {author} {\bibfnamefont {T.}~\bibnamefont {Mori}}, \bibinfo
  {author} {\bibfnamefont {D.~A.}\ \bibnamefont {Abanin}},\ and\ \bibinfo
  {author} {\bibfnamefont {E.~G.}\ \bibnamefont {{Dalla Torre}}},\ }\bibfield
  {title} {\bibinfo {title} {Quantum and classical floquet prethermalization},\
  }\href {https://doi.org/https://doi.org/10.1016/j.aop.2023.169297} {\bibfield
   {journal} {\bibinfo  {journal} {Annals of Physics}\ }\textbf {\bibinfo
  {volume} {454}},\ \bibinfo {pages} {169297} (\bibinfo {year}
  {2023})}\BibitemShut {NoStop}%
\bibitem [{\citenamefont {Kuwahara}\ \emph {et~al.}(2016)\citenamefont
  {Kuwahara}, \citenamefont {Mori},\ and\ \citenamefont
  {Saito}}]{KUWAHARA201696}%
  \BibitemOpen
  \bibfield  {author} {\bibinfo {author} {\bibfnamefont {T.}~\bibnamefont
  {Kuwahara}}, \bibinfo {author} {\bibfnamefont {T.}~\bibnamefont {Mori}},\
  and\ \bibinfo {author} {\bibfnamefont {K.}~\bibnamefont {Saito}},\ }\bibfield
   {title} {\bibinfo {title} {Floquet–magnus theory and generic transient
  dynamics in periodically driven many-body quantum systems},\ }\href
  {https://doi.org/https://doi.org/10.1016/j.aop.2016.01.012} {\bibfield
  {journal} {\bibinfo  {journal} {Annals of Physics}\ }\textbf {\bibinfo
  {volume} {367}},\ \bibinfo {pages} {96} (\bibinfo {year} {2016})}\BibitemShut
  {NoStop}%
\bibitem [{\citenamefont {Mori}\ \emph {et~al.}(2016)\citenamefont {Mori},
  \citenamefont {Kuwahara},\ and\ \citenamefont
  {Saito}}]{PhysRevLett.116.120401}%
  \BibitemOpen
  \bibfield  {author} {\bibinfo {author} {\bibfnamefont {T.}~\bibnamefont
  {Mori}}, \bibinfo {author} {\bibfnamefont {T.}~\bibnamefont {Kuwahara}},\
  and\ \bibinfo {author} {\bibfnamefont {K.}~\bibnamefont {Saito}},\ }\bibfield
   {title} {\bibinfo {title} {Rigorous bound on energy absorption and generic
  relaxation in periodically driven quantum systems},\ }\href
  {https://doi.org/10.1103/PhysRevLett.116.120401} {\bibfield  {journal}
  {\bibinfo  {journal} {Phys. Rev. Lett.}\ }\textbf {\bibinfo {volume} {116}},\
  \bibinfo {pages} {120401} (\bibinfo {year} {2016})}\BibitemShut {NoStop}%
\bibitem [{Note4()}]{Note4}%
  \BibitemOpen
  \bibinfo {note} {The analysis holds well for $r||\delta ||\ll 1$. Since
  $r||\delta ||=O(\tau ^2t)$ for second-order PF, under a fixed step size $\tau
  =t/r$, it implies $t\ll \protect \frac {1}{\tau ^2}$.}\BibitemShut {Stop}%
\end{thebibliography}%

\end{document}